\newtheorem{theorem}{Theorem} 
\newtheorem{lemma}[theorem]{Lemma} 
\newtheorem{cor}[theorem]{Corollary} 
\newcommand{\change}[1]{{#1}}
\newcommand{\note}[1]{{#1}}
\newcommand{\Fnote}[1]{{#1}} 
\newcommand{\FNote}[1]{{#1}} 
\begin{document}


\title{Morphing Planar Graph Drawings with Unidirectional Moves}
\author{Fidel Barrera-Cruz\thanks{University of Waterloo, Waterloo,
Canada, {\tt \{fbarrera, pehaxell,
  alubiw\}@uwaterloo.ca}}\ \thanks{Partially supported by CONACyT.} \and Penny
Haxell\footnotemark[1]\ \thanks{Partially supported by NSERC.} \and
Anna Lubiw\footnotemark[1]\ \footnotemark[3] }


\maketitle

\begin{abstract}

  Alamdari et al.~\cite{poly-morph} showed that given two
  straight-line planar drawings of a graph, there is a morph between
  them that preserves planarity and consists of a polynomial number of
  steps where each step is a \emph{linear morph} that moves each
  vertex at constant speed along a straight line.  \Fnote{An important
    step in their proof consists of converting a \textit{pseudo-morph}
    (in which contractions are allowed) to a true morph. Here we
    introduce the notion of
  \emph{unidirectional morphing} step,} where the vertices move along
lines that all have the same direction.
\FNote{Our main result is to show that any planarity preserving
  pseudo-morph consisting of unidirectional steps and contraction of
  low degree vertices can be turned into a true morph without
  increasing the number of steps}. Using this, we strengthen Alamdari
  et al.'s result to use only unidirectional morphs, and in the
  process we simplify the proof.
\end{abstract}

\section{Introduction}

\Fnote{Intuitively a morph can be thought of as a continuous
  deformation between structures. Morphs have been of interest in
  computer graphics~\cite{Gomes99}. Morphing can be used in the areas
  of medical imaging and geographical information
  systems~\cite{Barequet96,Barequet04} for reconstructing a surface
  given a sequence of parallel slices. Here we are interested in
  morphs restricted to graph drawings.}  

A \emph{morph} between two planar drawings $\Gamma_1$ and $\Gamma_2$
of a graph $G$ is a continuous family of drawings indexed by time
$t \in [0,1]$ where the drawing at time $t=0$ is $\Gamma_1$ and the
drawing at time $t=1$ is $\Gamma_2$.  A morph \emph{preserves
  planarity} if all intermediate drawings are planar.  Of necessity,
this means that $\Gamma_1$ and $\Gamma_2$ are topologically
equivalent, i.e., have the same faces and the same outer face.  A
morph \emph{preserves straight-line planarity} if all intermediate
drawings are straight-line planar, in which case the morph is a
continuous movement of the vertices of the graph drawing.

In 1944 Cairns~\cite{Cairns} proved the existence of a straight-line
planarity preserving morph between any two topologically equivalent
straight-line planar drawings of a triangulated graph.  Cairns' proof
is constructive but the resulting morph takes an exponential number of
steps.  Thomassen~\cite{T} extended the result to general
straight-line planar drawings by augmenting both drawings to
isomorphic triangulations, later called ``compatible''
triangulations~\cite{ASS}.  For a graph of $n$ vertices, compatible
triangulations can be found in polynomial time and have size $O(n^2)$
and this bound is tight. \change{Floater and
  Gotsman~\cite{Floater1999} gave a polynomial time algorithm
using Tutte's graph drawing algorithm~\cite{Tutte1963}, but in their
morph the trajectories of the vertices are complicated.  

\Fnote{Morphs preserving other aspects have also been studied. For
  example, the method in~\cite{Floater1999} was generalized by Gotsman
  and Surazhsky~\cite{Gotsman2001} to obtain a morph between two
  simple polygons that preserves simplicity. The existence of
  intersection-free morphs of maximal planar graphs has been
  established for certain types of sphere drawings by Kobourov and
  Landis~\cite{Kobourov2008}. Biedl et al. present an algorithm to
  morph between any two planar and orthogonal graph drawings while
  preserving planarity and orthogonallity using a polynomial number of
  linear morphing steps in~\cite{biedl2013}.

  Lubiw and Petrick~\cite{LP} showed that given two planar drawings of
  a graph there exists a planarity preserving morph that consists of
  polynomially many linear morphs where edges are allowed to
  bend. After this result was proven it was still unknown whether it
  was possible to morph between drawings of maximal planar graphs
  while preserving planarity in a polynomial number of steps.}
Recently, Alamdari et al.~\cite{poly-morph} gave an algorithm, based
on Cairns' approach, that solves the problem using $O(n^{2})$
\emph{linear morphs}, a morph that moves each vertex along a straight
line at uniform speed.  Using compatible triangulations this gives a
morph of $O(n^4)$ steps for general planar graphs.} 
 
In this paper we improve the result of Alamdari et al.~on morphing
triangulations in two ways: (1) we give a simpler proof; and (2) our
elementary steps are \emph{unidirectional morphs}.  A
\emph{unidirectional morph} is a linear morph where every vertex moves
parallel to the same line, i.e.~there is a line $L$ with unit
direction vector $\bar \ell$ such that each vertex $v$ moves at
constant speed from initial position $v_0$ to position
$v_0 + k_v \bar \ell$ for some $k_v \in {\mathbb R}$.  Note that $k_v$
may be positive or negative and that different vertices may move
different amounts along direction $\bar \ell$.  We call this an
\emph{$L$-directional morph}.
\Fnote{Our main contribution is to show that any planar preserving
  pseudo-morph consisting of unidirectional steps and contraction of
  low degree vertices can be turned into a true morph without
  increasing the number of steps. Using this and following the
  approach of Alamdari et al. we obtain a morph which is simpler and
  requires the same number of morphing steps, namely $O(n^{2})$.}
\FNote{ Very recently, our result was used by Angelini et al.~\cite{Angelini14} to give
  an improved morphing algorithm that uses only $O(n)$ unidirectional steps.}


In the remainder of this section we describe the high-level idea of
our result.

The existence proof of Cairns works by successively \emph{contracting}
a vertex of degree at most 5 to a neighbour, i.e.~moving the vertex
along one of its incident edges until it reaches the other endpoint of
the edge.  (The relevance of low degree is that a vertex of degree at
most 5 always has a neighbour to which it can be contracted while
preserving planarity.)  Each such step is a unidirectional morph, for
the trivial reason that only one vertex moves.  The number of steps is
exponential.  The result is not a true morph since vertices become
coincident, but Cairns argues that each vertex can be moved close to,
but not coincident with, the target vertex.  This fix causes a further
exponential increase in the number of steps.

Alamdari et al.~improved the number of morphing steps to a polynomial
number using the same two-phase approach.  The first phase finds a
\emph{pseudo-morph} which is defined as a sequence of the following
kinds of steps:
\begin{itemize}
\item{} a linear morph
\item{} a contraction of a vertex $p$ to another vertex, followed by a
pseudo-morph between the two reduced drawings, and then an
``uncontraction'' of $p$.
\end{itemize}

The number of steps in a pseudo-morph is defined to be the number of
linear morphs plus the number of contractions and uncontractions.
Alamdari et al.~give a pseudo-morph of $O(n^2)$ steps.

In the second phase they convert the pseudo-morph to a true morph that
avoids coincident vertices.  This requires a somewhat intricate
geometric argument that instead of contracting a vertex $p$ to a
neighbour, it is possible to move $p$ close to the neighbour and keep
it close during subsequent morphing steps without increasing the
number of steps.

\Fnote{Here we use a different approach for the second phase, which results
in a simpler proof and uses only unidirectional morphs. This is in
Section~\ref{sec:coincident}. To obtain our strengthened version of Alamdari et
al.'s result, we use essentially the same pseudo-morph
for the first stage. We must verify that unidirectional morphs suffice. This
is described in Section~\ref{sec:pseudo}.}

We use the following notation.  If $\Gamma_1, \ldots, \Gamma_k$ are
straight-line planar drawings of a graph, then $\langle \Gamma_1,
\ldots, \Gamma_k \rangle$ denotes the morph that consists of the $k-1$
linear morphs from $\Gamma_i$ to $\Gamma_{i+1}$ for $i=1, \ldots,
k-1$.

\section{A pseudo-morph with unidirectional morphing steps}
\label{sec:pseudo}

Alamdari et al.~\cite{poly-morph} give a pseudo-morph of $O(n^2)$
steps to go between any two topologically equivalent straight-line
planar drawings of a triangulated graph on $n$ vertices.  In this
section we show that their pseudo-morph can be implemented with
unidirectional morphs.  They show that the only thing that is needed
is a solution to the following problem using $O(n)$ linear morphs:

\medskip\noindent PROBLEM 3.2.~(4-GON CONVEXIFICATION) Given a
triangulated graph G with a triangle boundary and a 4-gon $abcd$ in a
straight-line planar drawing of $G$ such that neither $ac$ nor $bd$ is
an edge outside of $abcd$ (i.e.,~$abcd$ does not have external
chords), find a pseudo-morph so that $abcd$ becomes convex.  \medskip

\begin{figure}[h]
  \label{fig:4-gon}
  \centering
  \includegraphics[width=1in]{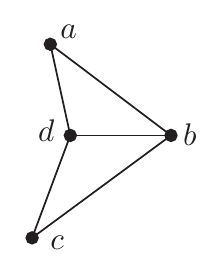}
  \caption{A 4-gon $abcd$.}
\end{figure}

The main idea is to use Cairns' approach: find a low-degree vertex,
contract it to a neighbour, and recurse on the resulting smaller
graph.  Each such contraction is a unidirectional morph.  This
approach works so long as there is a low-degree vertex that is not a
\emph{problematic vertex} $p$ defined as follows:

\begin{enumerate}
\item{} $p$ is a vertex of the boundary triangle $z_1, z_2, z_3$.
\item{} $p$ is a vertex of the 4-gon $abcd$ and is not on the
  boundary.
\item{} $p$ is outside the 4-gon, is not on the boundary, has degree
  at most 5, and is adjacent to both $a$ and $c$, and either $a$ or
  $c$ is in the kernel of the polygon formed by the neighbours of $p$.
  (In this case contracting $p$ to $a$ or $c$ would
  create the edge $ac$ outside the 4-gon.)
\end{enumerate}

Alamdari et al.~show how to handle each type of problematic vertex.
We must go through the cases and argue that unidirectional morphs
suffice in each case.

Problematic vertices of type (2) and type (3) are handled (in their
Sections 4.1 and 4.3) by moving a single vertex at a time either by
contracting or by moving a vertex very close to another vertex.
Moving a single vertex is a unidirectional morph so these cases are
done.

It remains to consider problematic vertices of type (1) which they do
in Section 4.2.  To handle this case they use an operation where one
vertex of a triangle moves along a straight line and the other
vertices inside the triangle follow along linearly.  We will show that
the motion is in fact unidirectional.  Because we will need it later
on, we will consider a more general situation where all three vertices
of the triangle undergo a unidirectional morph.

\begin{lemma} Let $a,b,c$ be the vertices of a triangle and let $x$ be
a point inside the triangle defined by the convex combination $
\lambda_1 a + \lambda_2 b + \lambda_3 c$ where $\sum \lambda_i = 1$
and $\lambda_i \ge 0$.  If $a$, $b$, and $c$ move linearly in the
direction of the vector $\bar{d}$ then so does $x$.
  \label{lemma:convex-comb}
\end{lemma}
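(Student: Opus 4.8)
The plan is to track the barycentric coordinates of $x$ through the motion and show they stay constant. Suppose at time $t=0$ the vertices are at positions $a_0, b_0, c_0$ and each moves in direction $\bar d$ at its own constant speed, so at time $t$ we have $a(t) = a_0 + t\,k_a \bar d$, $b(t) = b_0 + t\,k_b \bar d$, $c(t) = c_0 + t\,k_c \bar d$ for scalars $k_a, k_b, k_c$. The natural definition of the trajectory of $x$ is to keep it at the same convex combination: $x(t) = \lambda_1 a(t) + \lambda_2 b(t) + \lambda_3 c(t)$, with the $\lambda_i$ fixed at their initial values. (This is exactly the rule ``the interior vertices follow along linearly'' used by Alamdari et al., rephrased in barycentric terms; one should state explicitly in the proof that this is the motion being analyzed.)

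First I would substitute the expressions for $a(t), b(t), c(t)$ into the definition of $x(t)$ and collect the terms involving $\bar d$. This gives
\[
x(t) = \bigl(\lambda_1 a_0 + \lambda_2 b_0 + \lambda_3 c_0\bigr) + t\,\bigl(\lambda_1 k_a + \lambda_2 k_b + \lambda_3 k_c\bigr)\,\bar d .
\]
The first parenthesized term is just $x(0) = x$, and the coefficient $k_x := \lambda_1 k_a + \lambda_2 k_b + \lambda_3 k_c$ is a fixed scalar. Hence $x(t) = x + t\,k_x \bar d$, which is precisely a linear (constant-speed, straight-line) motion in direction $\bar d$. That is the entire content of the lemma. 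Since each trajectory $a(t), b(t), c(t)$ is an affine function of $t$ and $x(t)$ is a fixed affine combination of them, $x(t)$ is automatically affine in $t$ — this is the conceptual one-line reason, and the displayed computation above just makes the direction vector explicit.

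I do not expect a genuine obstacle here; the statement is essentially a linearity/affineness observation. The only points that need care are bookkeeping rather than mathematical: (i) being explicit that the $\lambda_i$ are held constant so that $x$ is genuinely ``the same point of the triangle'' throughout — this is what makes the claim meaningful and is the implicit hypothesis of the ``follow along linearly'' operation; and (ii) noting that the argument works verbatim for any point expressed as a convex combination of $a,b,c$, and indeed for any affine combination, so in particular it applies simultaneously to all interior vertices of the triangulated region, which is what is needed for the type-(1) problematic vertex case and for the later application mentioned in the text. I would also remark that the speed $k_x$ of $x$ is the corresponding convex combination of the speeds $k_a, k_b, k_c$, so if $a,b,c$ all move at the same speed then so does $x$, and if $a,b,c$ stay within a bounded slab then so does $x$; these quantitative consequences may be convenient downstream.
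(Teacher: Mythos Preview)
Your proof is correct and is essentially identical to the paper's own argument: both write $a(t)=a_0+tk_a\bar d$, etc., substitute into $x(t)=\lambda_1 a(t)+\lambda_2 b(t)+\lambda_3 c(t)$, and read off $x(t)=x_0+t(\lambda_1 k_a+\lambda_2 k_b+\lambda_3 k_c)\bar d$. The extra commentary you add (that the $\lambda_i$ are held fixed, that the argument works for any affine combination, and that $k_x$ is the convex combination of the speeds) is accurate and harmless but not needed for the lemma as stated.
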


\begin{proof}
  Suppose the morph is indexed by $t \in [0,1]$ and that the positions
  of the vertices at time $t$ are $a_t, b_t, c_t, x_t$.  Suppose that
  $a$ moves by $k_1 \bar d$, and $b$ moves by $k_2 \bar d$, and $c$
  moves by $k_3 \bar d$.  Thus $a_t = a_0 + tk_1 \bar{d}$ and etc.
  Then $$x_t = \lambda_1 a_t + \lambda_2 b_t + \lambda_3 c_t \\ =
  \lambda_1 a_0 + \lambda_2 b_0 + \lambda_3 c_0 + t(\lambda_1 k_1 +
  \lambda_2 k_2 + \lambda_3 k_3) \bar{d} \\= x_0 + tk \bar{d}$$ where
  $k = \lambda_1 k_1 + \lambda_2 k_2 + \lambda_3 k_3$.  Thus $x$ also
  moves linearly in direction $\bar d$.
\end{proof}

Alamdari et al.~show that to handle problematic vertices of type (1)
it suffices to handle the following two cases.  We repeat their
arguments, adding details of exactly how Lemma~\ref{lemma:convex-comb}
gives unidirectional morphs.
\begin{enumerate}
\item[A.] There is a boundary vertex, say $z_1$, of degree 3.  See
Figure~\ref{fig:4-gon-deg-3}(a).

Then $z_1$ must have a neighbour $y$ that is adjacent to $z_2$ and
$z_3$.  If $abcd$ lies entirely inside the triangle $T=yz_2z_3$ then
we recursively morph the subgraph contained in $T$.  Otherwise $abcd$
must include at least one triangle outside $T$.  It cannot have both
triangles outside $T$ because of the assumption that there is no edge
$ac$.  Thus we can assume without loss of generality that $abcd$
consists of triangle $z_1yz_2$ and an adjacent triangle inside $T$,
see Figure~\ref{fig:4-gon-deg-3}(b).  The solution is to move $y$
towards $z_1$ to directly convexify $abcd$.  As $y$ is moved, the
contents of triangle $T$ follow along linearly.  By
Lemma~\ref{lemma:convex-comb} this is a unidirectional morph.
 
\begin{figure}[h]
  \centering
  \includegraphics[width=4.5in]{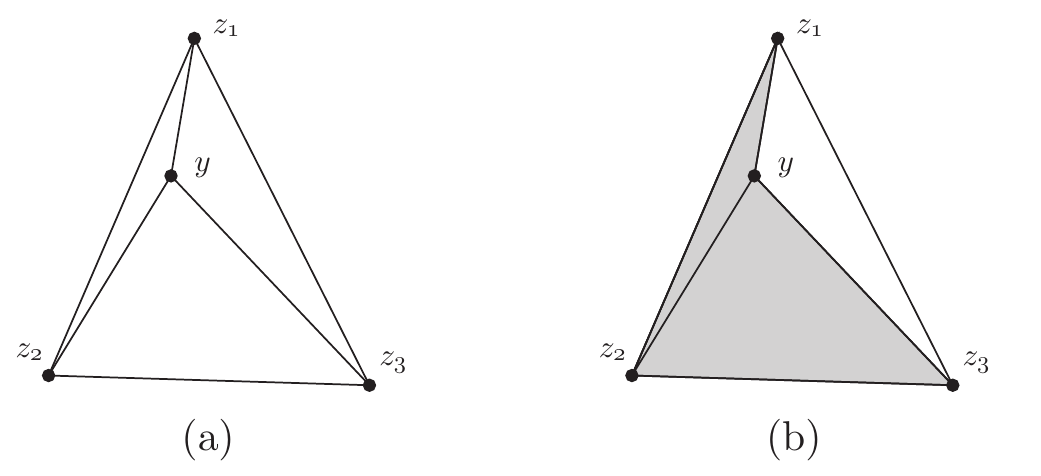}
  \caption{A boundary vertex of degree 3.}    \label{fig:4-gon-deg-3}
\end{figure}

\item[B.] All three boundary vertices have degree 4.  See
Figure~\ref{fig:4-gon-deg-4}(a).

In this case there must exist an internal triangle $T= y_1 y_2 y_3$
containing all the internal vertices with $y_i$ adjacent to $z_j$ for
$j \ne i, i,j \in \{1,2,3\}$.
If $abcd$ lies entirely inside $T$ then we recursively morph the
subgraph contained in $T$.  If $abcd$ lies entirely outside $T$ then
without loss of generality $abcd$ is $z_2z_1y_2y_3$.  The solution is
to move $y_2$ towards $z_1$ to directly convexify the
4-gon.  
As $y_2$ is moved, the contents of triangle $T$ follow along linearly.
By Lemma~\ref{lemma:convex-comb} this is a unidirectional morph.

  The final possibility is that $abcd$ consists of one triangle
outside $T$ and one triangle inside $T$.  We may assume without loss
of generality that $abcd$ consists of triangle $z_1y_3y_2$ and an
adjacent triangle inside $T$, see Figure~\ref{fig:4-gon-deg-4}(b).  We
will convexify the 4-gon $z_1 y_2 y_1 y_3$, which will necessarily
also convexify $abcd$.  The solution is to move $y_1$ towards $z_2$ to
convexify the 4-gon while the contents of triangle $T$ follow along
linearly.  By Lemma~\ref{lemma:convex-comb} this is a unidirectional
morph.

  \begin{figure}[h] \centering
    \includegraphics[width=4.5in]{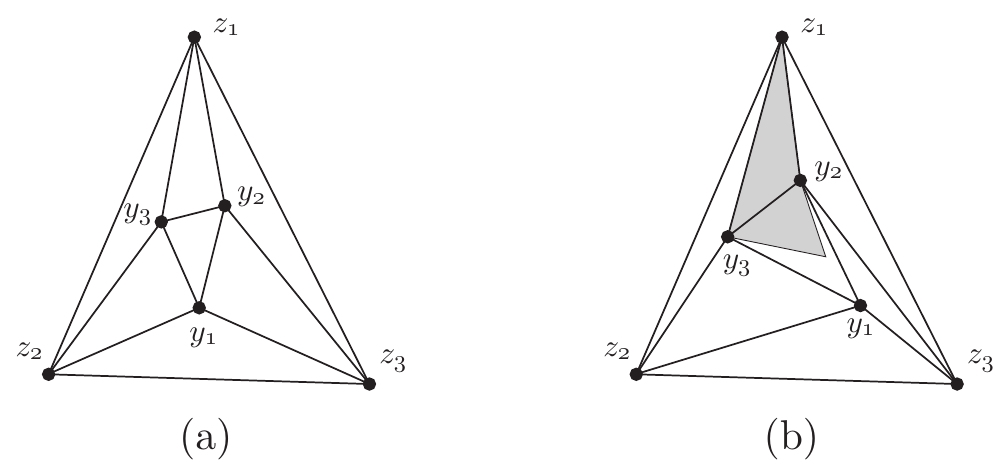}
    \caption{All three boundary vertices of degree 4.}
    \label{fig:4-gon-deg-4}
  \end{figure}

\end{enumerate}

This completes the argument that the pseudo-morph of Alamdari et
al.~can be implemented with unidirectional morphs.

\vspace{3mm}
\section{Avoiding coincident vertices}
\label{sec:coincident}

\Fnote{In this section we describe our key lemma for converting a
  pseudo-morph to a true morph that avoids coincident vertices.

\begin{lemma}\label{lem:true_morph}
  Let $\mathcal{M}$ be a pseudo-morph between drawings $\Gamma_{1}$
  and $\Gamma_{2}$ consisting of $k$ planar unidirectional steps that
  acts on a triangulated planar graph $G$. If only vertices of degree
  at most $5$ are contracted in $\mathcal{M}$, then there
  exists a planar morph consisting of $k$ unidirectional steps from
  $\Gamma_{1}$ to $\Gamma_{2}$.
\end{lemma}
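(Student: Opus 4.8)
The plan is to induct on the recursive structure of the pseudo-morph $\mathcal{M}$, processing the contraction/uncontraction pairs from the outside in. The key idea is that instead of literally contracting a vertex $p$ to a neighbour $u$, we will keep $p$ at a small but positive distance from $u$, placed inside the star of $u$, and verify that the inductively-obtained true morph for the smaller graph can be ``lifted'' to a true morph of the full graph by having $p$ track $u$ appropriately. Concretely, I would fix a parameter $\varepsilon>0$ (to be chosen small enough at the end) and maintain the invariant that at every moment $p$ lies within distance $\varepsilon$ of $u$ inside the current drawing, in fact inside a triangle of the link of $u$; since $p$ has degree at most $5$, its neighbourhood forms a small polygon $P$ whose kernel (after the contraction edge was chosen so planarity is preserved) is nonempty and contains $u$, so there is room to place $p$ near $u$ without creating crossings among $p$'s incident edges.

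The core of the argument is to show that unidirectionality is preserved under this lifting, and this is exactly where Lemma~\ref{lemma:convex-comb} does the work. When the reduced drawing undergoes an $L$-directional morph, the neighbours of $u$ all move; I want $p$ to move so that the whole star of $u$ stays planar and $p$ stays $\varepsilon$-close. The trick is to express $p$'s position as a fixed convex combination of the vertices of whichever link-triangle of $u$ it sits in (or, more robustly, as $u$'s new position plus a fixed small offset vector that is itself a convex combination of edge vectors emanating from $u$). Because the displacement of every neighbour of $u$ is a scalar multiple of the common direction vector $\bar\ell$, Lemma~\ref{lemma:convex-comb} (or a direct recomputation in its spirit) shows that $p$'s displacement is also a scalar multiple of $\bar\ell$ — so adding $p$ back costs no extra steps and the step remains $L$-directional. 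The uncontraction step itself — moving $p$ from its $\varepsilon$-close position to its designated final position in $\Gamma_2$ (or to wherever the next outer level needs it) — is a single-vertex move, hence trivially unidirectional, and absorbs into the count already allotted to the uncontraction in $\mathcal{M}$.

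The remaining ingredients are bookkeeping but need care. First, one must check that placing $p$ $\varepsilon$-close to $u$ does not introduce crossings with edges \emph{not} incident to $p$: since in the reduced drawing the region formerly occupied by $p$'s star is an open set containing $u$, for small enough $\varepsilon$ the point $p$ and its incident edges stay inside this region throughout the reduced morph (here we use that the reduced morph is a finite sequence of linear morphs, so the relevant regions vary ``continuously'' and a single $\varepsilon$ works for all of them, chosen after the reduced morph is fixed). Second, nested contractions must be handled: when $p$ is contracted to $u$ and then, deeper in the recursion, $u$ itself is contracted to some $w$, the offsets compose, and one picks the $\varepsilon$'s in a decreasing sequence (innermost smallest) so that all the ``$\varepsilon$-close'' conditions are simultaneously satisfiable — a vertex sitting $\varepsilon_i$-close to something that is $\varepsilon_{i-1}$-close to something else, with $\varepsilon_i \ll \varepsilon_{i-1}$. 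Third, at time $0$ and time $1$ the drawings are required to be exactly $\Gamma_1$ and $\Gamma_2$, so the very first and very last configurations must use the true positions, not the $\varepsilon$-perturbed ones; this is why the outermost contraction is paired with a genuine uncontraction move at the end.

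The main obstacle, I expect, is the simultaneous choice of the perturbation parameters together with a clean argument that the ``stay inside the old star region'' property holds uniformly over an entire reduced linear morph, not just pointwise in time — in other words, turning the intuitive picture (``$p$ hides near $u$'') into a quantitative statement that survives composition across the recursion depth. Once that is set up, the unidirectionality claim is essentially immediate from Lemma~\ref{lemma:convex-comb}, and the step count is unchanged by construction because every new motion of $p$ is slaved to an already-counted step or is itself an already-counted uncontraction.
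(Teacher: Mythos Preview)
Your inductive framework matches the paper's: convert the inner pseudo-morph to a true morph $M$ by induction, reinsert $p$ into every drawing of $M$, and replace the contraction and uncontraction by single-vertex (hence unidirectional) moves. The step count is preserved for exactly the reasons you give.

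The gap is in how you reinsert $p$ when $\deg p = 5$. Your plan is to place $p$ at a \emph{fixed} convex combination of the vertices of some triangle incident to $a$ (the contraction target) and invoke Lemma~\ref{lemma:convex-comb}. Unidirectionality is indeed automatic that way, but planarity is not: $p$ must lie in the kernel of the pentagon $P=abcde$ at every instant, and a fixed convex combination need not stay there. During the morph $a$ may become a reflex vertex of $P$; then the kernel in a neighbourhood of $a$ is the thin ``negative'' cone bounded by the \emph{extensions} of $ab$ and $ae$ past $a$, on the side opposite $b$ and $e$ (cf.\ Figure~\ref{fig:sectors}(b)). That cone can be made arbitrarily narrow and can point in a direction that is not a positive combination of $b-a,\,c-a,\,d-a,\,e-a$, so for suitable pentagons (still with $a$ in the kernel) every point of the form $(1-\epsilon)a+\epsilon_1 x+\epsilon_2 y$ with $x,y\in\{b,c,d,e\}$ lies outside the kernel, no matter how small $\epsilon$ is. The obstruction is directional, not metric, so shrinking your $\varepsilon$ does not help; and if you instead let the triangle (hence the coefficients) change from step to step, Lemma~\ref{lemma:convex-comb} no longer applies and you have not shown the step is unidirectional.

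The paper's remedy is exactly to abandon the fixed-combination idea for the pentagon. In Lemma~\ref{lem:five-gon} it chooses a \emph{different} position $p_i$ in each drawing $\Gamma_i$, subject only to the constraint that the segment $p_ip_{i+1}$ be parallel to $L_i$; a sidedness-preservation lemma for $L$-directional morphs (Lemma~\ref{lemma:p-move}) then shows that this keeps $p$ in the sector throughout each step, and a backwards induction (Lemma~\ref{lemma:nice}) shows that a consistent sequence of such $p_i$ exists. Your argument is correct for degrees $3$ and $4$ and coincides with Lemma~\ref{lemma:morph-aug-4}, but degree $5$ needs this extra machinery. As a side remark, once the induction is run in this order---fully converting the inner pseudo-morph before reinserting $p$---the nested-$\varepsilon$ bookkeeping you worry about disappears: at each level a single $\varepsilon$ is chosen after the inner true morph is already fixed.
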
}

\Fnote{We now outline the proof of Lemma~\ref{lem:true_morph} by again
  following the approach of Alamdari et al.~\cite{poly-morph}}.
Suppose the pseudo-morph consists of the contraction of a non-boundary
vertex $p$ of degree at most 5 to a neighbour $a$, followed by a
pseudo-morph $\cal M$ of the reduced graph and then an uncontraction
of $p$.  The pseudo-morph $\cal M$ consists of unidirectional morphing
steps and by induction we can convert it to a morph $M$ that consists
of the same number of unidirectional morphing steps.  We will show how
to modify $M$ to $M^p$ by adding $p$ and its incident edges back into
each drawing of the morph sequence.  To obtain the final morph, we
replace the contraction of $p$ to $a$ by a unidirectional morph that
moves $p$ from its initial position to its position at the start of
$M^p$, then follow the steps of $M^p$, and then replace the
uncontraction of $p$ by a unidirectional morph that moves $p$ from its
position at the end of $M^p$ to its final position.  The result is a
true morph that consists of unidirectional morphing steps and the
number of steps is the same as in the original pseudo-morph.

Thus our main task is to modify a morph $M$ to a morph $M^p$ by adding
a vertex $p$ of degree at most 5 and its incident edges back into each
drawing of the morph sequence, maintaining the property that each step
of the morph sequence is a unidirectional morph.  It suffices to look
at the polygon $P$ formed by the neighbours of $p$.  We know that $P$
has a vertex $a$ that remains in the kernel of $P$ throughout the
morph.  We will place $p$ near $a$.  We separate into the cases where
$P$ has 3 or 4 vertices, which are quite easy, and the case where $P$
has 5 vertices, which is more involved.  The following two lemmas
handle these two cases, and together strengthen Lemma 5.2
of~\cite{poly-morph} by adding the unidirectional condition.  

\begin{lemma} Let $P$ be a $\le 4$-gon and let $\Gamma_1, \ldots,
\Gamma_k$ be straight-line planar drawings of $P$ such that each morph
$\langle \Gamma_i, \Gamma_{i+1} \rangle, i=1, \ldots, k-1$ is
unidirectional and planar, and vertex $a$ of $P$ is in the kernel of
$P$ at all times during the whole morph $\langle \Gamma_1, \ldots,
\Gamma_k \rangle$.  Then we can augment each drawing $\Gamma_i$ to a
drawing $\Gamma_i^p$ by adding vertex $p$ at some point $p_i$ inside
the kernel of the polygon $P$ in $\Gamma_i$ and adding straight line
edges from $p$ to each vertex of $P$ in such a way that each morph
$\langle \Gamma_i^{p}, \Gamma_{i+1}^{p} \rangle$ is unidirectional and
planar.
  \label{lemma:morph-aug-4}
\end{lemma}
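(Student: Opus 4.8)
The plan is to place $p$ at the point $p_i := a_i$ shifted slightly toward the interior of the kernel — but since $a$ itself is always in the kernel, the cleanest choice is to define $p_i$ as a fixed convex combination $p_i := (1-\varepsilon) a_i + \varepsilon \, c_i$ of $a$ and a second kernel vertex $c$ (or the centroid of $P$ in $\Gamma_i$), for a single small parameter $\varepsilon>0$ chosen once and for all. First I would observe that since $a$ is in the kernel of $P$ at all times, so is every point sufficiently close to $a$ on the segment toward the centroid; hence for small enough $\varepsilon$ the point $p_i$ lies strictly inside the kernel of $P$ in every drawing $\Gamma_i$, so each augmented drawing $\Gamma_i^p$ (add $p$ at $p_i$, connect to all $\le 4$ vertices of $P$ by straight segments) is planar. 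The point of using a fixed $\varepsilon$ and expressing $p_i$ as the \emph{same} convex combination in every frame is that then the trajectory of $p$ during the morph $\langle \Gamma_i, \Gamma_{i+1}\rangle$ is forced by Lemma~\ref{lemma:convex-comb}.

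The key steps, in order: (1) Fix the convex-combination coefficients; concretely, let $b,c,d$ be the other vertices of $P$ (using $\le 4$, pad by repetition if $P$ is a triangle) and set $p_i := (1-\varepsilon)a_i + \tfrac{\varepsilon}{3}(b_i + c_i + d_i)$, so $p$ is written as one fixed convex combination of the vertices of $P$ valid in all frames. (2) Planarity of each $\Gamma_i^p$: since $a$ is in the kernel, $a$ sees every vertex of $P$; by continuity/compactness over the finitely many frames, there is a uniform $\varepsilon_0>0$ such that for all $\varepsilon \le \varepsilon_0$ and all $i$, $p_i$ is still in the (open) kernel of $P$ in $\Gamma_i$, so the star of edges from $p_i$ to the vertices of $P$ lies inside $P$ without crossings. (3) Unidirectionality of each step: in the morph $\langle \Gamma_i, \Gamma_{i+1}\rangle$ all vertices of $P$ move parallel to a common direction $\bar\ell_i$ at constant speed; since $p$ is the fixed convex combination above of those vertices, Lemma~\ref{lemma:convex-comb} (applied with the triangle replaced by the $\le 4$-gon, or by grouping $b,c,d$ into their barycentre which itself moves in direction $\bar\ell_i$ by the lemma, then taking a convex combination of $a$ and that barycentre) shows $p$ also moves parallel to $\bar\ell_i$ at constant speed. (4) Planarity of each step $\langle \Gamma_i^p, \Gamma_{i+1}^p\rangle$: throughout this linear morph the vertices of $P$ trace a planar linear morph of $P$ by hypothesis, and at every intermediate time $t$ the position of $p$ is the same fixed convex combination of the current positions of $P$'s vertices; since $a$'s position at time $t$ is in the kernel of $P$ at time $t$ (given), the same uniform-$\varepsilon$ argument shows $p(t)$ is in that kernel, so the augmented drawing at time $t$ is planar and no edge incident to $p$ is crossed.

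The main obstacle — really the only subtlety — is step (4): one must be sure that "$a$ in the kernel of $P$ at all times" genuinely persists through the continuum of intermediate drawings of each linear morph, not just at the endpoints $\Gamma_i$, and that the uniform $\varepsilon_0$ can be chosen to work simultaneously for all of these intermediate times across all $k-1$ steps. This follows because the kernel-membership condition "$a$ sees vertex $v$" is, for each frame-time $t$, a finite conjunction of linear inequalities in $t$ whose strict satisfaction on a compact interval gives a uniform margin; equivalently, the distance from $a(t)$ to the boundary of $\ker P(t)$ is a continuous positive function of $t\in[0,1]$ and hence bounded below by some $\delta>0$, and any $\varepsilon$ with $\varepsilon \cdot \operatorname{diam}(P(t)) < \delta$ for all $t$ works. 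I would also note explicitly that we take the \emph{same} $\varepsilon$ in every $\Gamma_i^p$ (this is what makes the whole augmented sequence consistent and is implicitly required so that $\Gamma_i^p$ is well-defined independently of which step we view it in). Everything else is immediate from Lemma~\ref{lemma:convex-comb} and the definition of the kernel.
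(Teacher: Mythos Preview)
Your opening suggestion---placing $p$ at a fixed convex combination $(1-\varepsilon)a_i+\varepsilon\,c_i$ of $a$ and the opposite vertex $c$---is correct and is exactly the paper's proof: since $a$ is in the kernel of the $4$-gon $abcd$, the diagonal $ac$ splits $P$ into two triangles, so every point of $ac$ sees all of $P$ and hence the whole segment $ac$ lies in the kernel at every instant. No small-$\varepsilon$ or compactness argument is needed; any $\varepsilon\in(0,1)$ works, and unidirectionality follows from Lemma~\ref{lemma:convex-comb}. (For the triangle, any fixed interior convex combination works, again by Lemma~\ref{lemma:convex-comb}.)

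However, the formula you actually carry through steps (1)--(4), namely $p_i=(1-\varepsilon)a_i+\tfrac{\varepsilon}{3}(b_i+c_i+d_i)$, fails when $a$ is a reflex vertex of the $4$-gon. Your compactness argument in (2) and (4) implicitly assumes that a short step from $a$ toward the centroid of $\{b,c,d\}$ enters the kernel, but $a$ sits on the \emph{boundary} of the kernel (it lies on edges $ab$ and $da$), so whether a perturbation stays in the kernel depends on its direction. Concretely, take $a=(0,1)$, $b=(1,0)$, $c=(\tfrac12,3)$, $d=(-1,0)$, a simple counterclockwise $4$-gon in which $a$ is reflex and lies in the kernel. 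The centroid of $b,c,d$ is $(\tfrac16,1)$, so $p_i=(\tfrac{\varepsilon}{6},1)$; this violates the half-plane of edge $da$ (namely $-x+y\ge 1$) for every $\varepsilon>0$, and $p_i$ is not even inside $P$. The fix is simply to drop the centroid variant and use your first idea, the $a$--$c$ combination, which coincides with the paper's argument.
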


\begin{proof}
  If $P$ is a triangle then by Lemma~\ref{lemma:convex-comb} we can
  place $p$ at a fixed convex combination of the triangle vertices in
  all the drawings $\Gamma_i$.

  If $P$ is a 4-gon $abcd$ then the line segment $ac$ also stays in
  the kernel, so we can place $p$ at a fixed convex combination of $a$
  and $c$ in all the drawings $\Gamma_i$ (using the degenerate version
  of Lemma~\ref{lemma:convex-comb} where the triangle collapses to a
  line segment).
\end{proof}

\begin{lemma}\label{lem:five-gon}
  Let $P$ be a $5$-gon and let $\Gamma_1, \ldots, \Gamma_k$ be
  straight-line planar drawings of $P$ such that each morph $\langle
  \Gamma_i, \Gamma_{i+1} \rangle, i=1, \ldots, k-1$ is unidirectional
  and planar, and vertex $a$ of $P$ is in the kernel of $P$ at all
  times during the whole morph $\langle \Gamma_1, \ldots, \Gamma_k
  \rangle$.  Then we can augment each drawing $\Gamma_i$ to a drawing
  $\Gamma_i^p$ by adding vertex $p$ at some point $p_i$ inside the
  kernel of the polygon $P$ in $\Gamma_i$ and adding straight line
  edges from $p$ to each vertex of $P$ in such a way that each morph
  $\langle \Gamma_i^{p}, \Gamma_{i+1}^{p} \rangle$ is unidirectional and planar.
\end{lemma}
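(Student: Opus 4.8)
The plan is to mimic the $\le 4$-gon case but account for the fact that a pentagon's kernel is not simply a segment through $a$. Label the pentagon $P = a\, u_1 u_2 u_3 u_4$ in cyclic order, so $a$ is adjacent to $u_1$ and $u_4$. Since $a$ stays in the kernel of $P$ throughout the whole morph $\langle \Gamma_1,\ldots,\Gamma_k\rangle$, in every drawing the diagonals $au_2$ and $au_3$ lie inside $P$, and they subdivide $P$ into three triangles $T^1 = a u_1 u_2$, $T^2 = a u_2 u_3$, $T^3 = a u_3 u_4$. The idea is to place $p$ inside one of these triangles, chosen \emph{consistently across all drawings}, at a fixed convex combination of that triangle's three corners; then $p$ together with its five incident segments can be handled one unidirectional step at a time. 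The subtlety is that during a single linear morph $\langle\Gamma_i,\Gamma_{i+1}\rangle$ the combinatorial structure can change: a vertex $u_j$ may cross a diagonal $au_{j'}$ (equivalently, $u_j$ enters or leaves the kernel), so the triangle containing a fixed convex combination point may cease to be a valid face. The first technical step is to show that a single unidirectional planar morph of $P$ in which $a$ stays in the kernel can be refined, \emph{without adding steps}, into a sequence of unidirectional morphs during each of which the three-triangle decomposition from $a$ is combinatorially stable — or better, to show directly that one can choose $p_i, p_{i+1}$ so that $\langle \Gamma_i^p, \Gamma_{i+1}^p\rangle$ stays planar even across such an event.

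Here is the cleaner route I would actually take. Fix a step $\langle\Gamma_i,\Gamma_{i+1}\rangle$. By Lemma~\ref{lemma:convex-comb}, if I pick barycentric coordinates $(\mu_0,\mu_1,\mu_2)$ and track the point $q(t) = \mu_0 a(t) + \mu_1 u_j(t) + \mu_2 u_{j'}(t)$ for \emph{any} three of the pentagon's vertices, $q$ moves unidirectionally (all vertices move parallel to a common $\bar\ell$, hence so does any affine combination). So the only real question is planarity: I need $p(t)$ to stay strictly inside $P(t)$ and the five segments $p(t)u_\cdot(t)$ to be noncrossing, for all $t\in[0,1]$ and for $p(t)$ equal to my chosen trajectory. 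Choose $p(t)$ to be a convex combination that is \emph{weighted heavily toward $a$}: write $p(t) = (1-\varepsilon)\,a(t) + \varepsilon\,c(t)$ where $c(t)$ is some convenient point that stays strictly inside $P(t)\setminus\{a\text{'s incident edges}\}$, e.g.\ a fixed convex combination of $a,u_2,u_3$ (the "middle" triangle $T^2$ from $a$), which is a genuine triangle throughout because $au_2,au_3$ are kernel diagonals. Then $p(t)$ lies on the open segment from $a(t)$ into the interior of $T^2(t)$, hence inside $P(t)$; and because $a$ sees all of $P$, the segment from $a(t)$ to each $u_\ell(t)$ is inside $P(t)$, so the five segments $p(t)u_\ell(t)$ are contained in the (nonconvex but star-shaped-from-$a$) union of triangles and pairwise cross only possibly near $a$ — which is avoided by taking $\varepsilon$ small enough that $p(t)$ is inside the small triangle cut off near $a(t)$ by a line transversal to all of $au_1,\ldots,au_4$. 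The key point is that $\varepsilon$ can be chosen \emph{once}, uniformly over the compact parameter set $\bigcup_i [0,1]$ and over finitely many steps, because the relevant geometric quantities (angles at $a$, distances) are continuous and bounded away from degeneracy by the hypothesis that $a$ is always in the kernel.

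So the steps, in order: (1) Fix the combinatorial labelling $a u_1 u_2 u_3 u_4$ of $P$, which is invariant throughout the morph. (2) Define $c(t)$ as a fixed convex combination of $a(t),u_2(t),u_3(t)$; by Lemma~\ref{lemma:convex-comb} it moves unidirectionally on each step, and it lies strictly inside $P(t)$ since $au_2,au_3$ are kernel diagonals. (3) For each step $i$ and each $t$, the five triangles $a\,u_\ell$-region are all visible from $a$; quantify how small a neighbourhood of $a(t)$ is guaranteed to be cut off by a common transversal line and see all of $P(t)$ "past it" — get a lower bound $\varepsilon_i > 0$ valid for all $t\in[0,1]$ on step $i$ by compactness. (4) Set $\varepsilon = \min_i \varepsilon_i > 0$ and $p(t) = (1-\varepsilon)a(t) + \varepsilon c(t)$; then $p_i := p$ at the start of step $i$ gives the required augmented drawings $\Gamma_i^p$, and each $\langle\Gamma_i^p,\Gamma_{i+1}^p\rangle$ is unidirectional (affine combination of unidirectional trajectories) and planar (by the visibility/star-shapedness argument). (5) Conclude that the number of steps is unchanged.

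The main obstacle I expect is step (3): making precise and correct the claim that "$p$ sufficiently close to $a$, inside the kernel, keeps the five new edges non-crossing," uniformly in $t$. The danger is that as some $u_j$ crosses a diagonal $au_{j'}$, the pentagon becomes momentarily degenerate at $a$ (three consecutive vertices $u_{j-1}, a\text{-side}, u_{j+1}$ nearly collinear through $a$), and one must confirm that even there a fixed small $\varepsilon$ works — i.e.\ that the hypothesis "$a$ in the kernel at all times" really does bound the angle subtended at $a$ away from $0$ and from $2\pi$ along the whole morph. I would handle this by noting that the kernel condition gives, at every time $t$, a closed half-plane bound for each edge of $P$ with $a$ strictly inside the intersection; continuity on the compact time interval then yields a uniform positive "clearance," which is exactly what is needed to choose $\varepsilon$ globally. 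A secondary, purely bookkeeping obstacle is ensuring that at the endpoints $\Gamma_1^p$ and $\Gamma_k^p$ the point $p$ lands where the surrounding contraction/uncontraction construction in the outline of Lemma~\ref{lem:true_morph} expects it; but since that construction explicitly inserts a unidirectional move of $p$ alone at each end, any interior placement near $a$ is acceptable, so this is not a genuine difficulty.
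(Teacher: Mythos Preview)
Your approach has a genuine gap: it fails when $a$ is a reflex vertex of $P$. In that case the kernel of $P$ near $a$ is what the paper calls the \emph{negative sector}, bounded not by the edges $au_1$ and $au_4$ but by their \emph{extensions} $au_1'$ and $au_4'$ through $a$. The direction from $a$ into the middle triangle $T^2=au_2u_3$ need not lie in this sector at all. Concretely, take $a=(0,0)$, $u_1=(1,-\tfrac12)$, $u_2=(10,1)$, $u_3=(-3,1)$, $u_4=(-1,-\tfrac12)$: this is a simple pentagon with $a$ reflex and $a$ in the kernel, the negative sector near $a$ is $\{y>\tfrac12|x|\}$, and the centroid of $au_2u_3$ lies in direction $(7,2)$ from $a$, which is outside that sector. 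For any $\varepsilon>0$ your point $p=(1-\varepsilon)a+\varepsilon c$ then fails to see $u_4$ (the segment $pu_4$ crosses the edge $au_1$), so $\Gamma^p$ is not planar. Relatedly, your step~(3) invokes ``a line transversal to all of $au_1,\ldots,au_4$'' to cut off a small triangle at $a$; when $a$ is reflex these four rays span an angle exceeding~$\pi$, so no such transversal exists.

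The paper's argument is structured precisely to handle this. It does \emph{not} use a fixed affine combination of vertices for $p$. Instead it fixes a small disk $D$ around $a$, lets $S_i=D\cap\ker P$ in $\Gamma_i$ (a positive or negative sector depending on whether $a$ is convex or reflex), and shows two things: first (Lemma~\ref{lemma:p-move}), if $p_i\in S_i$, $p_{i+1}\in S_{i+1}$, and the segment $p_ip_{i+1}$ is parallel to $L_i$, then the $L_i$-directional morph keeps $p$ in the sector throughout, even when the sector flips between positive and negative mid-step; second (Lemma~\ref{lemma:nice}), by backward induction the set of admissible positions $p_i$ is an $L_i$-truncation of $S_i$, hence nonempty. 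The positions $p_i$ are therefore chosen step by step, not as a single global convex combination, and this extra freedom is exactly what accommodates the reflex case and the convex-to-reflex transitions.
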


The proof of Lemma~\ref{lem:five-gon} is more involved.  Let $P$ be
the 5-gon $abcde$ labelled clockwise.  We assume that vertex $a$ is
fixed throughout the morph.  This is not a loss of generality because
if $a$ moves during an $L$-directional morph we can translate the
whole drawing back in direction $L$ so that $a$ returns to its
original position.  An $L$-directional morph composed with a
translation in direction $L$ is again an $L$-directional morph, and
planarity is preserved since the relative positions of vertices do not
change.

%
%
Observe that at any time instant $t$ during morph $\langle
\Gamma_1,\dots,\Gamma_k \rangle$ there exists an $\epsilon_{t} > 0$
such that the intersection between the disk $D$ centered at $a$ with
radius $\epsilon_{t}$ and the kernel of polygon $P$ consists of a
positive-area sector $S$ of $D$.  This is because $a$ is a vertex of
the kernel of $P$.  Let $\epsilon = \min_{t}\epsilon_{t}$ be the
minimum of $\epsilon_{t}$ among all time instants $t$ of the morph.

Fix $D$ to be the disk of radius $\epsilon$ centered at $a$.  In case
$a$ is a convex vertex of $P$, the sector $S$ is bounded by the edges
$ab$ and $ae$ and we call it a \emph{positive} sector.  See
Figure~\ref{fig:sectors}(a).
In case $a$ is a reflex vertex of $P$, the sector $S$ is bounded by
the extensions of edges $ab$ and $ae$ and we call it a \emph{negative}
sector.  See Figure~\ref{fig:sectors}(b).  More precisely, let $b'$
and $e'$ be points so that $a$ is the midpoint of the segments $bb'$
and $ee'$ respectively.  The negative sector is bounded by the
segments $ae'$ and $ab'$.  Note that when an $L$-directional morph is
applied to $P$, the points $b'$ and $e'$ also move at uniform speed in
direction $L$.

\begin{figure}[htb]
 \centering 
 \subfloat[]{\includegraphics[scale=1.7]{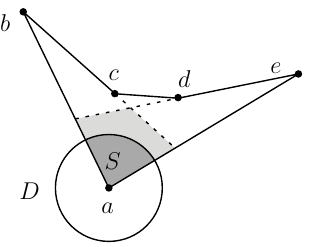}\label{fig:convex-disk}}
 \hspace{20pt}
 \subfloat[]{\includegraphics[scale=1.7]{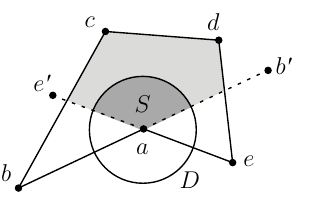}\label{fig:reflex-disk}}
 \caption{A disk $D$ centered at $a$ whose intersection with the
   kernel of $P$ (the lightly shaded polygonal region) is a non-zero-area
   sector $S$ (darkly shaded).  (a) Vertex $a$ is convex and $S$ is a
   positive sector.  (b) Vertex $a$ is reflex and $S$ is a negative
   sector.  }
 \label{fig:sectors}
\end{figure}

The important property we use from now on is that any point in the
sector $S$ lies in the kernel of polygon $P$.  Let the sector in
drawing $\Gamma_i$ be $S_i$ for $ i=1, \ldots, k$.  Let the direction
of the unidirectional morph $\langle \Gamma_i, \Gamma_{i+1} \rangle$
be $L_i$ for $ i=1, \ldots, k-1$.  In other words, $\langle \Gamma_i,
\Gamma_{i+1} \rangle$ is an $L_i$-directional morph.

Our task is to choose for each $i$ a position $p_i$ for vertex $p$
inside sector $S_i$ so that all the $L_i$-directional morphs keep $p$
inside the sector at all times.  A necessary condition is that the
line through $p_i p_{i+1}$ be parallel to $L_i$.  We will first show
that this condition is in fact sufficient (see
Lemma~\ref{lemma:p-move}).  Then we will show that such points $p_i$
exist.

Translate $L_i$ to go through point $a$ and distinguish the following
two cases in the relationship between $L_i$ and $S_i$:

\begin{description}
\item[one-sided case] Points $b_i$ and $e_i$ lie in the same closed
half-plane determined by $L_i$.  In this case, whether the sector
$S_i$ is positive or negative, $L_i$ does not intersect the interior
of $S_i$.  See Figure~\ref{fig:one-side}.  An $L_i$-directional morph
keeps $b_i$ and $e_i$ on the same side of $L_i$ so if $S_i$ is
positive it remains positive and if $S_i$ is negative it remains
negative.

\item[two-sided case] Points $b_i$ and $e_i$ lie on opposite sides of
$L_i$.  In this case $L_i$ intersects the interior of the sector
$S_i$.  See Figure~\ref{fig:two-side}.  During an $L_i$-directional
morph the sector $S_i$ may remain positive, or it may remain negative,
or it may switch between the two, although it can only switch once.
\end{description}

\begin{figure}[htb]
  \centering 
  \subfloat[]{\includegraphics[scale=1.5]{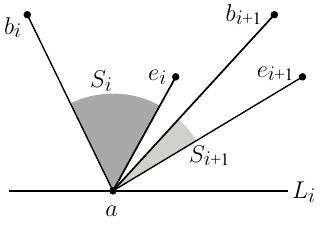}} \hspace{40pt} 
  \subfloat[]{\includegraphics[scale=2]{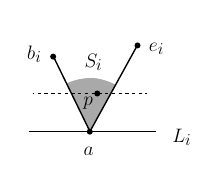}}
  \caption{The one-sided case where $S_i$ lies to one side of $L_i$,
    illustrated for a positive sector $S_i$.  (a) An $L_i$-directional
    morph to $S_{i+1}$.  (b) $p$ remains inside the sector iff it
    remains inside $D$ and between the two lines $ba$ and $ea$.  }
  \label{fig:one-side}
\end{figure}

\begin{figure}[htb]
  \centering 
  \subfloat[]{\includegraphics[scale=1.7]{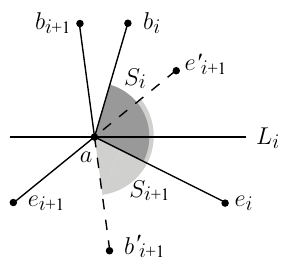}} \hspace{40pt}
  \subfloat[]{\includegraphics[scale=2]{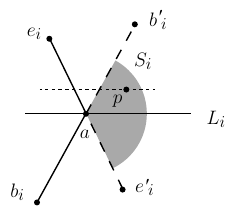}}
  \caption{The two-sided case where $S_i$ contains points on both
    sides of $L_i$.  (a) An $L_i$-directional morph from the positive
    sector $S_i$ bounded by $ b_i a e_i$ to the negative sector
    $S_{i+1}$ bounded by $ e'_{i+1} a b'_{i+1}$.  (b) $p$ remains
    inside the sector iff it remains inside $D$ and on the same side
    of the lines $bb'$ and $ee'$.  }
  \label{fig:two-side}
\end{figure}

Our main tool is the following lemma proving that ``sidedness'' on
line $L$ is preserved in an $L$-directional morph.

\begin{lemma}\label {lem:move_boundary} 
  Let $L$ be a horizontal line and $x_{0},x_{1},y_{0},y_{1}$ be points
  in $L$. Consider a point $x$ that moves at constant speed from
  $x_{0}$ to $x_{1}$ in one unit of time. If $y_{i}$ is to the right
  of $x_{i}$, $i=0,1$, and $y$ is a point that moves at constant speed
  from $y_{0}$ to $y_{1}$ in one unit of time then $y$ remains to the
  right of $x$ during their movement.  Note that $x_0$ may lie to the
  right or left of $x_1$ and ditto for $y_0$ and $y_1$.
\end{lemma}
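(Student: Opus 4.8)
The plan is to set up coordinates so that $L$ is the $x$-axis, and to reduce the two-dimensional motion to a one-dimensional statement about $x$-coordinates. Since $x$ and $y$ both move in the direction of $L$, their $y$-coordinates (in the paper's ambient sense, perpendicular to $L$) are irrelevant; more precisely, writing $x(t)=x_0+t(x_1-x_0)$ and $y(t)=y_0+t(y_1-y_0)$, I only need to track the signed horizontal displacement $f(t) = y(t)-x(t)$, interpreted as the real number giving $y$'s position minus $x$'s position along $L$. The hypothesis ``$y_i$ is to the right of $x_i$'' says $f(0)>0$ and $f(1)>0$ (or $\ge 0$, depending on the intended strictness — I would state it with the same convention as used downstream, presumably strict or weak consistently).

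The key observation is that $f(t)$ is an affine (linear) function of $t$: $f(t) = (y_0-x_0) + t\big((y_1-x_1)-(y_0-x_0)\big) = (1-t) f(0) + t f(1)$. So $f$ restricted to $[0,1]$ is a convex combination of its endpoint values $f(0)$ and $f(1)$, both of which are positive by hypothesis. Hence $f(t) > 0$ for all $t\in[0,1]$ (or $\ge 0$ in the weak version, with equality only possible at the endpoints if both endpoint values are positive — actually if both are strictly positive then $f(t)>0$ on all of $[0,1]$). This says exactly that $y$ remains strictly to the right of $x$ throughout the motion. The final remark that $x_0$ may lie on either side of $x_1$ (and similarly for $y$) is automatically accommodated, since nothing in the argument used the signs of $x_1-x_0$ or $y_1-y_0$ — only the two endpoint inequalities on $f$.

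I do not expect any real obstacle here; this is a one-line convexity argument once the right quantity $f(t)=y(t)-x(t)$ is identified. The only thing to be slightly careful about is bookkeeping: making sure ``to the right of'' is consistently encoded as a sign of a scalar along the oriented line $L$, and stating whether the conclusion is strict or weak in a way that matches how the lemma is invoked in the proof of Lemma~\ref{lem:five-gon}. I would also note explicitly that the same proof gives the symmetric statement with ``left'' in place of ``right'', and that if $y_i$ lies \emph{on} $x_i$ for one $i$ but strictly to the right for the other, then $y$ lies strictly to the right of $x$ for all $t$ in the half-open interval — this stronger form may be what is actually needed when $p$ starts on the boundary of a sector.
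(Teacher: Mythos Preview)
Your proof is correct and is essentially identical to the paper's: the paper writes $y_i = x_i + \delta_i$ with $\delta_i>0$ and computes $y_t = x_t + (1-t)\delta_0 + t\delta_1$, which is exactly your observation that $f(t)=y(t)-x(t)=(1-t)f(0)+tf(1)>0$. The extra remarks about strict versus weak inequalities and the symmetric ``left'' version are reasonable but not needed for the paper's applications.
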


\begin{proof}
  \begin{figure}[!h] \centering 
    \includegraphics{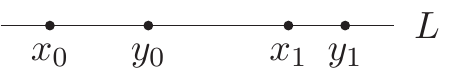}
    \caption{Points $x$ and $y$ move from $x_{0}$ to $x_{1}$ and from
      $y_{0}$ to $y_{1}$ respectively.}
    \label{fig:move_boundary}
  \end{figure}
  
  Let $x_{i}$ and $y_{i}$, $i=0,1$, be points as described above, see
  Figure~\ref{fig:move_boundary}.  Denote by $x_{t}$ and $y_{t}$ the
  positions of $x$ and $y$ for $0<t<1$. First note that
  \begin{equation}
    y_{i}=x_{i}+\delta_{i}\label{eq:yi}
  \end{equation}
  for $i=0,1$, with $\delta_{i}>0$. Since $x$ and $y$ are moving at
  constant speed, we have $x_{t}=(1-t)x_{0}+tx_{1}$ and
  $y_{t}=(1-t)y_{0}+ty_{1}$. Now, using equation~\eqref{eq:yi} in the
  expression for $y_{t}$ we have
  \begin{align}
    y_{t}&=(1-t)(x_{0}+\delta_{0})+t(x_{1}+\delta_{1})\\
&=x_{t}+(1-t)\delta_{0}+t\delta_{1},
  \end{align}
  where $(1-t)\delta_{0}+t\delta_{i}>0$.
\end{proof}

\begin{cor}
  \label{cor:line-side} Consider an $L$-directional morph acting on
points $p$, $r$ and $s$.  If $p$ is to the right of the line through
$rs$ at the beginning and the end of the $L$-directional morph, then
$p$ is to the right of the line through $rs$ throughout the
$L$-directional morph.
\end{cor}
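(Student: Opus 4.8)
The plan is to reduce Corollary~\ref{cor:line-side} directly to Lemma~\ref{lem:move_boundary} by projecting onto the line $L$. Since the morph is $L$-directional, every vertex moves parallel to $L$, so I may set up coordinates in which $L$ is horizontal. The key observation is that ``$p$ is to the right of the line through $rs$'' is determined by the sign of a cross product, and under motion parallel to a fixed direction this sign is controlled by the relative horizontal positions of $p$ and of the point where the line $rs$ crosses the horizontal level of $p$.

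First I would introduce coordinates so that $L$ is the $x$-axis direction; write $p_t=(p^x_t,p^y_t)$, $r_t$, $s_t$ for the positions at time $t\in[0,1]$. Because the morph is $L$-directional, the $y$-coordinates $p^y$, $r^y$, $s^y$ are constant in $t$; only the $x$-coordinates change, and each changes affinely (constant speed) in $t$. Now, since $p$ lies strictly to the right of line $rs$ at $t=0$ and $t=1$, in particular $p$ is not \emph{on} the line $rs$ at those times, so the horizontal line $y=p^y$ is not the line $rs$ itself; hence it meets line $r_ts_t$ in a single point $q_t=(q^x_t,p^y)$ for every $t$ (the case where $r_t s_t$ is itself horizontal needs a separate, easy treatment — then $r$ and $s$ have the same constant $y$-coordinate, different from $p^y$, and ``right of the line'' just means ``below or above'' which is preserved trivially). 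Being to the right of line $rs$ at the level $y=p^y$ is then equivalent to $p^x_t > q^x_t$ (with orientation chosen consistently with the clockwise labelling). So it remains to show $q^x_t > $ ... wait, to show $p^x_t > q^x_t$ for all $t$, and I already have it at $t=0,1$.

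The remaining point is that $q^x_t$ is an affine function of $t$, i.e. $q$ moves at constant speed along $L$: indeed $q_t$ is the intersection of the fixed horizontal line $y=p^y$ with the segment-line $r_ts_t$, and since $r_t$ and $s_t$ each move affinely parallel to the $x$-axis, the parameter $\mu_t$ with $q_t=(1-\mu_t)r_t+\mu_t s_t$ is determined by $(1-\mu_t)r^y+\mu_t s^y=p^y$, which is \emph{independent of $t$} because $r^y,s^y,p^y$ are all constant; thus $\mu_t\equiv\mu$ is constant and $q^x_t=(1-\mu)r^x_t+\mu s^x_t$ is affine in $t$. Now $q$ and $p$ are two points moving at constant speed along the horizontal line $y=p^y$ with $p$ to the right of $q$ at both endpoints, so Lemma~\ref{lem:move_boundary} (applied with $x=q$, $y=p$) gives that $p$ stays to the right of $q$, hence to the right of line $rs$, throughout. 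I expect the only mildly delicate step to be bookkeeping the orientation convention so that ``to the right of the line through $rs$'' matches ``$p^x > q^x$'' with a fixed sign for all $t$; once coordinates are fixed this is routine, and the degenerate horizontal-$rs$ case must be mentioned but is immediate.
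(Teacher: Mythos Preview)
Your argument is correct and is exactly the reduction the paper has in mind: the paper states the result as an immediate corollary of Lemma~\ref{lem:move_boundary} and gives no separate proof, and your construction of the intersection point $q_t$ on the horizontal line $y=p^y$, together with the observation that the barycentric parameter $\mu$ is time-independent (since it is determined solely by the fixed $y$-coordinates), is the natural way to make that reduction explicit. Your handling of the degenerate horizontal-$rs$ case and the orientation bookkeeping is also fine; note that an equivalent and slightly slicker phrasing is to observe that the signed cross product $(s_t-r_t)\times(p_t-r_t)$ is affine in $t$ because all $y$-coordinates are constant, so positivity at the endpoints gives positivity throughout---this is the same content as Lemma~\ref{lem:move_boundary}.
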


We are now ready to prove our main lemma about the relative positions
of points $p_i$ and $p_{i+1}$.

\begin{lemma}
  \label{lemma:p-move} If point $p_i$ lies in sector $S_i$ and point
$p_{i+1}$ lies in sector $S_{i+1}$ and the line $p_i p_{i+1}$ is
parallel to $L_i$ then an $L_i$-directional morph from $S_i, p_i$ to
$S_{i+1}, p_{i+1}$ keeps the point in the sector at all times.
\end{lemma}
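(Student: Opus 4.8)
\textbf{Proof proposal for Lemma~\ref{lemma:p-move}.}

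The plan is to reduce the claim to Corollary~\ref{cor:line-side} by characterizing membership in the sector $S_i$ purely in terms of ``sidedness'' with respect to two lines through $a$, together with membership in the fixed disk $D$. Recall that $a$ is fixed throughout the morph and the disk $D$ (centered at $a$ with radius $\epsilon$) does not move. A point $q$ lies in the positive sector at time $t$ iff $q \in D$ and $q$ lies on the correct side of the line through $a$ and $b_t$ and on the correct side of the line through $a$ and $e_t$; for a negative sector the same holds with $b_t,e_t$ replaced by the reflected points $b'_t,e'_t$ (which also move at constant speed in direction $L_i$, as noted in the text). So the sector at each instant is cut out by $D$ and by two half-planes whose bounding lines each pass through the moving points $\{a, b_t\}$ (or $\{a, e_t\}$, or their reflections).

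First I would dispose of the disk condition: since $D$ is fixed and $p$ moves on a straight segment from $p_i \in S_i \subseteq D$ to $p_{i+1}\in S_{i+1}\subseteq D$, convexity of $D$ gives $p_t \in D$ for all $t$. Next I would handle the one-sided case. Here $b_i,e_i$ lie in one closed half-plane bounded by $L_i$, so an $L_i$-directional morph keeps them on that side and the sector stays positive (or stays negative) throughout; the two bounding lines of the sector are the lines $a b_t$ and $a e_t$ (resp.\ $a b'_t$, $a e'_t$). Apply Corollary~\ref{cor:line-side} twice: once with the moving line through $r=a$, $s=b_t$ and the point $p_t$, and once with $r=a$, $s=e_t$ — using that $p_i$ is on the correct side at $t=0$ (since $p_i\in S_i$) and $p_{i+1}$ is on the correct side at $t=1$ (since $p_{i+1}\in S_{i+1}$), and that the hypothesis line $p_ip_{i+1}\parallel L_i$ makes $p$'s motion an $L_i$-directional motion consistent with the corollary's setup. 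This shows $p_t$ stays on the correct side of both lines, hence $p_t\in S_t$.

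The two-sided case is the main obstacle, because the sector can flip from positive to negative midway through the morph, so the pair of bounding lines changes identity. The key observation is that in all three regimes (stays positive, stays negative, switches once) the correct description is uniform if we phrase it in terms of the \emph{reflected} points: a point $q$ lies in $S_t$ iff $q\in D$, $q$ is on a fixed consistent side of the line $b_t b'_t$ (the line through $a$ spanned by direction $ab_t$), and $q$ is on a fixed consistent side of the line $e_t e'_t$ — this is exactly the content of Figure~\ref{fig:two-side}(b). Since $b_t,b'_t,e_t,e'_t$ all move at constant speed in direction $L_i$, the lines $b_tb'_t$ and $e_te'_t$ are each spanned by two points undergoing an $L_i$-directional motion, so Corollary~\ref{cor:line-side} applies verbatim: $p$ being on the right side at $t=0$ and $t=1$ forces it to stay on that side throughout. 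Applying this to both lines and intersecting with $p_t\in D$ yields $p_t\in S_t$ for all $t\in[0,1]$. I would close by remarking that the ``switch happens at most once'' fact is not actually needed for the argument — the reflected-line formulation sidesteps it — but it confirms consistency with the earlier case analysis.
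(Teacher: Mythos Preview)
Your proposal is correct and follows essentially the same route as the paper: the same one-sided/two-sided case split, the same convexity argument for containment in $D$, and the same reduction to Corollary~\ref{cor:line-side} via the bounding lines $ab,ae$ (one-sided) and $bb',ee'$ (two-sided). Your remark that the ``at most one switch'' fact is not actually used is accurate and a nice clarification.
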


\begin{proof} 
  We use the notation that $b$ moves from $b_i$ to $b_{i+1}$, $p$
  moves from $p_i$ to $p_{i+1}$, etc.

  First consider the one-sided case.  Suppose $S_i$ is a positive
  sector (the case of a negative sector is similar). Observe that a
  point $p$ remains in the sector during an $L_i$-directional morph if
  and only if it remains in the disc $D$ and remains between the lines
  $ba$ and $ea$.  See Figure~\ref{fig:one-side}(b).  Because $p_i$ and
  $p_{i+1}$ both lie in disc $D$, thus the line segment between them
  lies in the disc, and $p$ remains in the disc throughout the morph.
  In the initial configuration, $p_i$ lies between the lines $b_i a$
  and $e_i a$, and in the final configuration $p_{i+1}$ lies between
  the lines $b_{i+1} a$ and $e_{i+1} a$.  Therefore by
  Corollary~\ref{cor:line-side} $p$ remains between the lines
  throughout the $L_i$-directional morph.  Thus $p$ remains inside the
  sector throughout the morph.

  Now consider the two-sided case.  Observe that a point $p$ remains
  in the sector during an $L_i$-directional morph if and only if it
  remains on the same side of the lines $b b'$ and $e e'$.  Note that
  this is true even when the sector changes between positive and
  negative.  See Figure~\ref{fig:two-side}(b).  As in the one-sided
  case, $p$ remains in the disc throughout the morph.  Also, $p$ is on
  the same side of the lines $b b'$ and $e e'$ in the initial and
  final situations, and therefore by Corollary~\ref{cor:line-side} $p$
  remains on the same side of the lines throughout the morph.  Thus
  $p$ remains inside the sector throughout the morph.
\end{proof}

With Lemma~\ref{lemma:p-move} in hand the only remaining issue is the
existence of points $p_i$.  We call the possible positions for $p_i$
inside sector $S_i$ the \emph{nice} points, defined formally as
follows:
\begin{itemize}
\item All points in the interior of $S_k$ are nice.
\item For $1 \leq i\leq k-1$, a point $p_i$ in the interior of $S_i$
is nice if there is a nice point $p_{i+1}$ in $S_{i+1}$ such that $p_i
p_{i+1}$ is parallel to $L_i$.
\end{itemize}

By Lemma~\ref{lemma:p-move} it suffices to show that all the nice sets
are non-empty.  We will in fact characterize the sets.  Given a line
$L$, an $L$-truncation of a sector $S$ is the intersection of $S$ with
an open slab that is bounded by two lines parallel to $L$ and
contains all points of $S$ in a small neighbourhood of $a$.  In
particular, an $L$-truncation of a sector is non-empty.

\begin{lemma}
  \label{lemma:nice} The set of nice points in $S_i$ is an
$L_i$-truncation of $S_i$ for $i=1, \ldots, k$.
\end{lemma}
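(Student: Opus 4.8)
\textbf{Proof plan for Lemma~\ref{lemma:nice}.}

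The plan is to prove the statement by reverse induction on $i$, from $i=k$ down to $i=1$, establishing at each stage that the set of nice points in $S_i$ is exactly an $L_i$-truncation of $S_i$ (with the convention that for $i=k$ the relevant truncation is all of $\mathrm{int}(S_k)$, obtained via a ``degenerate'' slab of infinite width). The base case $i=k$ is immediate from the definition: every interior point of $S_k$ is nice. For the inductive step, suppose the nice points of $S_{i+1}$ form an $L_{i+1}$-truncation $T_{i+1}$ of $S_{i+1}$, that is, $T_{i+1} = S_{i+1} \cap W_{i+1}$ for an open slab $W_{i+1}$ bounded by two lines parallel to $L_{i+1}$ and containing all points of $S_{i+1}$ near $a$. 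A point $p_i \in \mathrm{int}(S_i)$ is nice precisely when the line through $p_i$ parallel to $L_i$ meets $T_{i+1}$. So the nice set of $S_i$ is $S_i \cap \pi_i^{-1}(\pi_i(T_{i+1}))$, where $\pi_i$ denotes projection along direction $L_i$ onto a line transverse to $L_i$; equivalently, it is $S_i$ intersected with the set of all lines in direction $L_i$ that hit $T_{i+1}$. I must show this set is an $L_i$-truncation of $S_i$.

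The key geometric observation is that for any sector $S$ with apex $a$ and any direction $L$, the union of all lines parallel to $L$ through points of $S$, when intersected back with $S$, is itself an $L$-truncation of $S$ — because the set of directions-$L$ lines meeting $S$ is an open slab bounded by the (at most two) lines through $a$ that are ``extreme'' with respect to direction $L$, namely the bounding rays of $S$ if they are not parallel to $L$, or $L$ itself through $a$ in the degenerate subcase. This slab always contains a neighbourhood of $a$ inside $S$ (in both the one-sided and two-sided cases analyzed above), so the intersection with $S$ is a valid truncation. Now $\pi_i(T_{i+1})$ is an open interval (it is the $\pi_i$-image of a convex truncated sector), hence its preimage $\pi_i^{-1}(\pi_i(T_{i+1}))$ is an open slab $W$ in direction $L_i$; I claim $W$ still contains all points of $S_i$ sufficiently close to $a$. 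This is the crucial point and it follows because $T_{i+1}$ contains all points of $S_{i+1}$ near $a$, hence $\pi_i(T_{i+1})$ contains a neighbourhood of $\pi_i(a)$ in the $\pi_i$-image of $S_{i+1}$, and since $S_i$ and $S_{i+1}$ share the apex $a$, the directions-$L_i$ lines through points of $S_i$ near $a$ are arbitrarily close to the line through $a$ in direction $L_i$, which lies in the closure of the relevant range. Intersecting the slab $W$ with $S_i$ then gives exactly an $L_i$-truncation of $S_i$ — after possibly shrinking the slab so that it is bounded (if $\pi_i(T_{i+1})$ were a half-line one takes the bounded side plus the local-at-$a$ requirement), which does not change the truncation near $a$.

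The main obstacle I anticipate is handling the interface between consecutive sectors carefully: $S_i$ and $S_{i+1}$ may be positive or negative and may have different angular openings, and I need that ``a neighbourhood of $a$ in $S_i$'' always maps under $\pi_i$ into the image of ``a neighbourhood of $a$ in $S_{i+1}$.'' The clean way to see this is to note that both neighbourhoods are contained in the disk $D$, that $a$ is fixed, and that by the one-sided/two-sided analysis together with Lemma~\ref{lemma:p-move} the bounding rays of $S_i$ (or their $b',e'$ extensions) are never parallel to $L_i$ in a way that would make the local slab degenerate to a single line — or if they are, the degenerate case still yields a (thin) nonempty truncation. I would also need to remark explicitly that the slab can be taken open and bounded so the definition of $L_i$-truncation given before the lemma is matched verbatim, and that non-emptiness of every nice set then follows immediately, completing the existence argument needed for Lemma~\ref{lem:five-gon}.
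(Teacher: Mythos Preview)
Your overall plan matches the paper's proof: reverse induction on $i$, with the nice set in $S_i$ described as $S_i$ intersected with the open slab of all lines parallel to $L_i$ that meet the inductively-known nice set $N_{i+1}$. The paper's proof is very short precisely because it resolves your ``crucial point'' by the one-sided/two-sided case split directly, without projection language: in the one-sided case $b_i,e_i$ and hence also $b_{i+1},e_{i+1}$ lie in the same closed half-plane of the line $L_i$ through $a$ (an $L_i$-directional morph cannot switch sides), so both $S_i$ and $S_{i+1}$ lie in that half-plane, the line $L_i$ through $a$ is one slab boundary, and the slab automatically contains all of $S_i$ near $a$; in the two-sided case $S_{i+1}$ straddles $L_i$, so the slab contains $a$ in its interior and again covers all of $S_i$ near $a$.

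Your final paragraph, however, misidentifies the reason this works. The question is not whether the bounding rays of $S_i$ are parallel to $L_i$, nor does Lemma~\ref{lemma:p-move} play any role here. What you actually need is the simple observation that $b$ and $e$ (and hence $b',e'$) do not change which side of the fixed line through $a$ parallel to $L_i$ they lie on during an $L_i$-directional morph; this forces $S_i$ and $S_{i+1}$ to occupy the same side(s) of that line, which is exactly what makes the slab derived from $N_{i+1}$ still cover a neighbourhood of $a$ in $S_i$. Replace the ``closure of the relevant range'' and ``degenerate case'' remarks with this observation and your argument becomes complete and identical to the paper's.
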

\begin{proof} 
  Let $N_i$ denote the nice points in $S_i$.  The proof is by
  induction as $i$ goes from $k$ to 1.  All the points in the interior
  of $S_k$ are nice.  Suppose by induction that $N_{i+1}$ is an
  $L_{i+1}$-truncation of $S_{i+1}$.

  Consider the one-sided case.  See
  Figure~\ref{fig:transition_case_1_2}.  The slab determining $N_i$
  consists of all lines parallel to $L_i$ that go through a point of
  $N_{i+1}$.  $L_i$ itself forms one boundary of the slab and the slab
  is non-empty since $N_{i+1}$ contains all of $S_{i+1}$ in a small
  neighbourhood of $a$.  Thus the slab contains all points of $S_i$ in
  a neighbourhood of $a$, and thus $N_i$ is an $L_{i}$-truncation of
  $S_i$.

  \begin{figure}[!h] 
    \centering
    \includegraphics{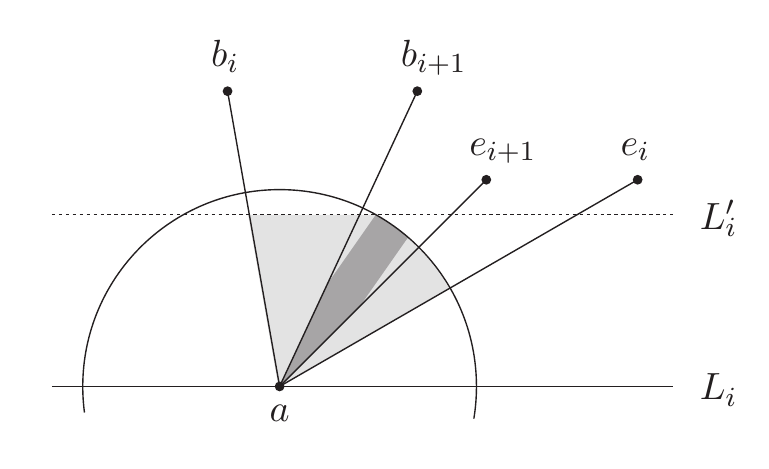}
    \caption{$N_i$ (lightly shaded) is an $L_i$-truncation of $S_i$ in
      the one-sided case.  $N_{i+1}$ is darkly shaded.  $L_i$ and
      $L'_i$ are the slab boundaries for $N_i$.  }
    \label{fig:transition_case_1_2}
  \end{figure}
  
  Consider the two-sided case.  See
  Figure~\ref{fig:pie_transition_case_2_1}.  The slab determining
  $N_i$ consists of all lines parallel to $L_i$ that go through a
  point of $N_{i+1}$.  The slab contains $a$ in its interior and thus
  $N_i$ is an $L_{i}$-truncation of $S_i$.

  \begin{figure}[htb] 
    \centering 
    \subfloat[]{\includegraphics[scale=.9]{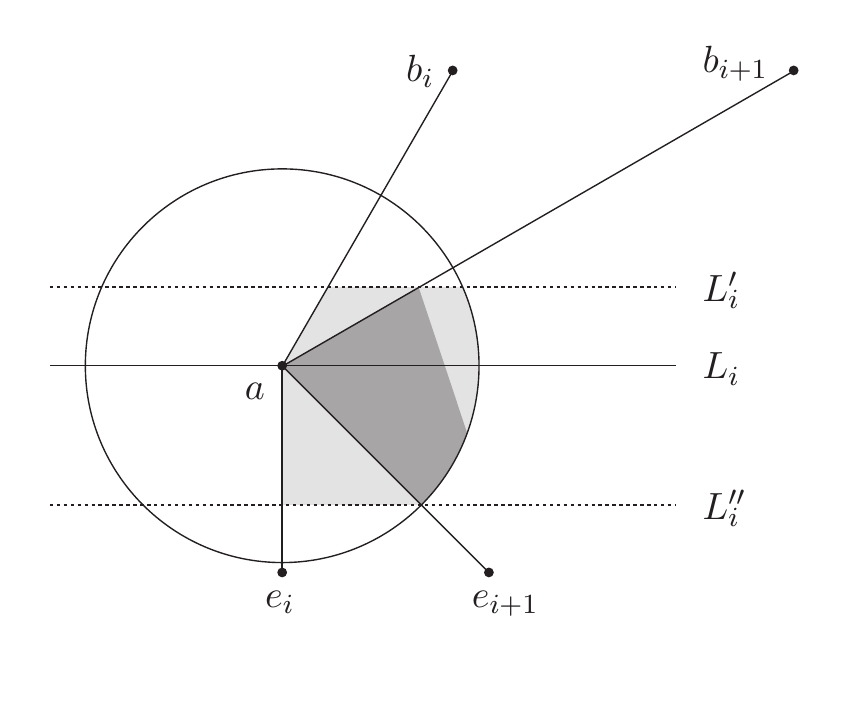}}
    \hspace{40pt} 
    \subfloat[]{\includegraphics[scale=.9]{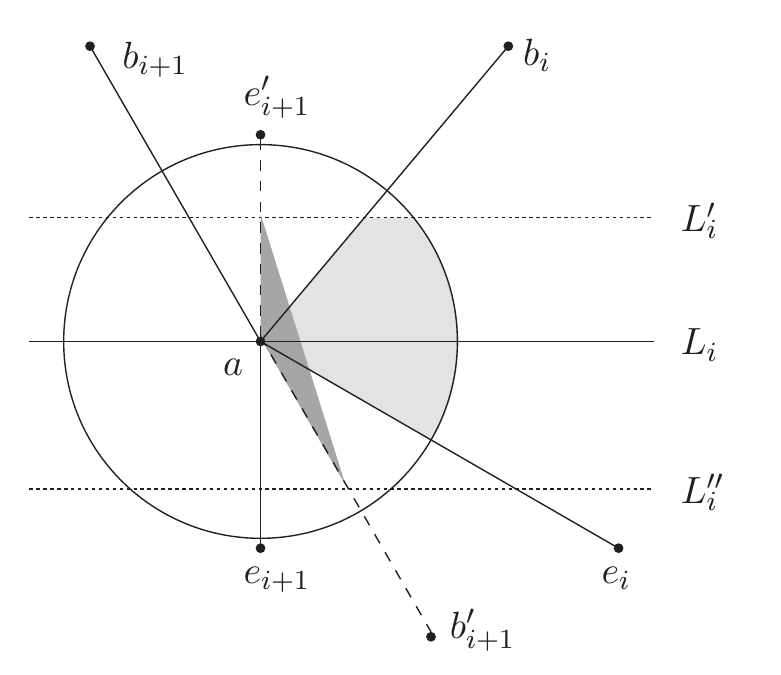}}
    \caption{$N_i$ (lightly shaded) is an $L_i$-truncation of $S_i$ in
      the two-sided case.  $N_{i+1}$ is darkly shaded.  $L'_i$ and
      $L''_i$ are the slab boundaries for $N_i$.  }
    \label{fig:pie_transition_case_2_1}
  \end{figure}

\end{proof}

Lemma~\ref{lemma:nice} implies in particular that the set of nice
points is non-empty, which provides the last ingredient in the proof
of Lemma~\ref{lem:five-gon}. 

\note{
\section{Concluding remarks}

In this paper we considered the problem of morphing between two
straight-line drawings of a planar triangulation. We showed that one
can morph between these drawings in $O(n^{2})$ steps, where each step
is a unidirectional morph. However, the grid size of the intermediate
drawings was not bounded. It is then a natural question to ask whether
such problem can be solved while guaranteeing that each intermediate
drawing is on a polynomially sized grid. \FNote{A partial result has
  been obtained in~\cite{BarreraCruz14,barrerahaxelllubiw14} where it
  is shown that for the class of Schnyder drawings we can morph
  between any two of them in $O(n^{2})$ steps while preserving
  planarity and where each intermediate drawing is in a $6n\times 6n$
  grid. However, the question of finding a bound on the grid size for
  morphs between drawings outside this class remains open.}}


%
%


\bibliographystyle{abbrv} \bibliography{unidirection}

\end{document}